\documentclass[11pt]{article}

\usepackage[T1]{fontenc}
\usepackage{lmodern}
\usepackage{amsmath,amssymb,amsthm,mathtools}
\usepackage[round,authoryear]{natbib}
\usepackage{microtype}
\usepackage[a4paper,margin=1.15in]{geometry}
\usepackage[
    colorlinks=true,
    citecolor=blue,
    linkcolor=blue,
    urlcolor=blue
]{hyperref}

\newtheorem{theorem}{Theorem}[section]
\newtheorem{lemma}[theorem]{Lemma}
\newtheorem{proposition}[theorem]{Proposition}
\newtheorem{corollary}[theorem]{Corollary}

\theoremstyle{definition}
\newtheorem{definition}[theorem]{Definition}
\newtheorem{example}[theorem]{Example}

\theoremstyle{remark}

\newcommand{\M}{\mathcal{M}}
\newcommand{\R}{\mathbb{R}}
\newcommand{\emptybundle}{\varnothing}

\title{Robust Welfare Decentralization under Population Entry}
\author{Yi-You Yang\thanks{Department of Applied Mathematics, Chung Yuan Christian University, Taoyuan City, Taiwan. E-mail address: yyyang@cycu.edu.tw}}

\begin{document}

\maketitle

\begin{abstract}
We ask when an incumbent economy with indivisible goods can accommodate an
arbitrary new participant while retaining an efficient allocation supported
by anonymous item prices. We call this property universal entry robustness.
An economy is universally entry-robust if and only if its aggregate welfare
valuation is additive. Although the requirement quantifies over all entrant
valuations, it can be tested using one canonical entrant whose value for a
bundle equals the loss in maximal incumbent welfare caused by removing that
bundle. When the characterization holds, one uniquely determined price vector
decentralizes the incumbent optimum and every one-agent extension. The proof
is direct and uses only demand optimality and welfare comparisons, making
transparent why arbitrary-entry robustness eliminates all bundle interactions
in aggregate welfare. We finally relate this argument to the robust-integrality
interpretation obtained from the linear-programming characterization of
Bikhchandani and Mamer (1997). Individual incumbent valuations may be
nonadditive and need not satisfy gross substitutes.
\end{abstract}

\noindent\textbf{Keywords:}
indivisible goods; competitive equilibrium; population entry; welfare
decentralization; aggregate valuation; anonymous prices.

\medskip

\noindent\textbf{JEL classification:} D47; D51; D61.

\section{Introduction}

Markets are rarely closed populations. New participants arrive with
preferences that need not resemble those of incumbent agents. In markets with
indivisible goods, such entry raises a basic welfare question: can the economy
continue to implement an efficient allocation through anonymous item prices,
regardless of the entrant's valuation?

Competitive-equilibrium existence with indivisible goods has been studied
from several complementary perspectives. \citet{BikhchandaniMamer1997}
provide a general linear-programming characterization for a fixed economy:
market-clearing prices exist exactly when the relevant welfare problem has an
integral optimum in its linear relaxation. The gross-substitutes condition
provides a primitive domain restriction under which equilibrium always exists
\citep{KelsoCrawford1982,GulStacchetti1999}. More recent work develops
broader domain-wide conditions through demand types, unimodularity, and the
structure of substitutability and complementarity across bundles
\citep{BaldwinKlemperer2019,JagadeesanTeytelboym2025}.

Our question changes the quantifier structure of the existence problem. We
fix an incumbent economy and vary only the valuation of a single entrant. The
incumbent economy is called \emph{universally entry-robust} if it admits a
competitive equilibrium and continues to admit one after every possible
one-agent extension. The entrant may have any normalized monotone valuation
and may therefore exhibit unrestricted complementarities. This formulation
produces a particularly transparent answer and permits a proof that does not
invoke linear-programming duality or complementary slackness.

Our characterization is stated in terms of the incumbent economy's aggregate
welfare valuation. For every set of goods $B$, let $v_N(B)$ be the greatest
sum of incumbent values attainable by allocating the goods in $B$ among the
incumbents. Hence, $v_N(B)$ is the maximal utilitarian welfare that the
incumbent population can generate from the resource set $B$. We prove
\[
\boxed{
\text{universal entry robustness}
\iff
\text{additivity of }v_N.
}
\]
Additivity means that every good has a social marginal value that is
independent of which other goods remain available and of how those goods are
reallocated among the incumbents. The relevant restriction is therefore
social rather than individual. Each incumbent may have a nonadditive
valuation, yet the economy can be robust if its aggregate welfare
opportunities are linear in the available goods.

The necessity result is established through a canonical stress test. Define
an entrant valuation by
\[
v_N^{\dagger}(B)
=
v_N(A)-v_N(A\setminus B).
\]
This entrant values a bundle by the loss in maximal incumbent welfare caused
by removing that bundle from the incumbent population. We show that the
incumbent economy is universally entry-robust if and only if it can
accommodate this single entrant. In the associated representative two-agent
economy, every division of the goods generates the same total welfare. If
anonymous item prices support one efficient division, they must support all
efficient divisions and therefore reproduce $v_N$ on every bundle. This is
possible exactly when $v_N$ is additive.

The sufficiency result has a direct decentralization interpretation. Suppose
that
\[
v_N(B)=q(B)=\sum_{a\in B}q_a
\qquad
\text{for every }B\subseteq A.
\]
At prices $q$, an arbitrary entrant chooses a demanded bundle $Y$. The
remaining goods $A\setminus Y$ can then be allocated efficiently among the
incumbents so that each incumbent demands her assignment at the same prices.
Moreover,
\[
w(Y)+v_N(A\setminus Y)
=
q(A)+w(Y)-q(Y),
\]
so the entrant's demand choice also selects a welfare-maximizing division
between the entrant and the incumbent population. Entry may change the
allocation, but the welfare-supporting price system need not change.

Our contribution has five parts. First, we formulate a population-robustness
question for a realized economy: holding the incumbent economy fixed, when
does competitive equilibrium survive every possible single-agent entry?
Second, we obtain a primitive aggregate-welfare characterization: universal
entry robustness holds if and only if the incumbents' aggregate welfare
valuation is additive. Third, although the robustness requirement quantifies
over all entrant valuations, it can be tested using one endogenous
welfare-loss entrant. Fourth, when the characterization holds, one common
price vector supports the incumbent economy and every one-agent extension.
Finally, we provide a direct welfare proof, based only on efficiency and
individual demand inequalities, which makes transparent why arbitrary-entry
robustness eliminates every bundle interaction in aggregate welfare.

The characterization can also be recovered from the linear-programming
framework of \citet{BikhchandaniMamer1997}. We discuss this alternative route
after presenting the direct proof. The two approaches are complementary:
their result supplies a general existence criterion for arbitrary fixed
economies, whereas our argument isolates and explains the economic mechanism
behind a particular robustness property. The paper is also related to work on
robust predictions of competitive equilibrium under indivisibilities. For
example, \citet{JagadeesanEtAl2020} derive equilibrium-independent
participation predictions in exchange economies with transferable utility.
Our robustness notion is instead cross-population: it asks whether equilibrium
existence and efficient price decentralization persist when the participant
set changes.

The remainder of the paper is organized as follows. Section~\ref{sec:model}
introduces the model and universal entry robustness.
Section~\ref{sec:preliminaries} gives two preliminary lemmas.
Section~\ref{sec:canonical} constructs the canonical welfare-loss entrant.
Section~\ref{sec:characterization} proves the characterization theorem.
Section~\ref{sec:welfare} develops its welfare interpretation.
Section~\ref{sec:existing} relates the result to existing equilibrium-
existence characterizations, and Section~\ref{sec:conclusion} concludes.

\section{Model and population entry}
\label{sec:model}

Let $N$ be a finite nonempty set of incumbent agents and let $A$ be a finite
nonempty set of indivisible goods, with one unit of each good. Each incumbent
$i\in N$ has a valuation
\[
v_i:2^A\to\R_+
\]
that is normalized, $v_i(\emptybundle)=0$, and monotone:
\[
v_i(B)\leq v_i(C)
\qquad
\text{whenever }B\subseteq C.
\]
Preferences are quasilinear. At a price vector $p\in\R^A$, the utility of
agent $i$ from bundle $B$ is
\[
u_i(B;p)
=
v_i(B)-p(B),
\qquad
p(B):=\sum_{a\in B}p_a,
\]
and the demand correspondence is
\[
D_{v_i}(p)
:=
\arg\max_{B\subseteq A}
\{v_i(B)-p(B)\}.
\]

The incumbent economy is denoted by
\[
\M
=
\langle N,A;(v_i)_{i\in N}\rangle.
\]

An \emph{allocation} is a collection
$\mathbf X=(X_i)_{i\in N}$ that partitions $A$: the bundles are pairwise
disjoint and
\[
\bigcup_{i\in N}X_i=A.
\]
A \emph{competitive equilibrium} is a pair $(p,\mathbf X)$ such that
$\mathbf X$ is an allocation and
\[
X_i\in D_{v_i}(p)
\qquad
\text{for every }i\in N.
\]
Empty bundles are allowed.

For every $B\subseteq A$, define the \emph{aggregate welfare valuation} of
the incumbent population by
\begin{equation}
\label{eq:social-valuation}
v_N(B)
:=
\max\left\{
\sum_{i\in N}v_i(B_i):
(B_i)_{i\in N}\text{ is a partition of }B
\right\}.
\end{equation}
Because the economy is finite, the maximum is attained. An allocation
$\mathbf X$ is \emph{efficient} if
\[
\sum_{i\in N}v_i(X_i)=v_N(A).
\]

A valuation $v$ is \emph{additive} if there exists a vector $q\in\R_+^A$
such that
\[
v(B)=q(B)=\sum_{a\in B}q_a
\qquad
\text{for every }B\subseteq A.
\]

Let $w:2^A\to\R_+$ be a normalized monotone valuation of a new agent,
denoted by $*$. The corresponding one-agent extension is
\[
\M\oplus w
:=
\langle
N\cup\{*\},A;(v_i)_{i\in N},w
\rangle.
\]

\begin{definition}[Universal entry robustness]
\label{def:robustness}
The incumbent economy $\M$ is \emph{universally entry-robust} if $\M$
admits a competitive equilibrium and, for every normalized monotone entrant
valuation $w$, the augmented economy $\M\oplus w$ admits a competitive
equilibrium.
\end{definition}

Universal entry robustness is an exact population-inclusion requirement. The
entrant is not assumed to be small relative to the incumbent economy, and no
restriction is imposed on the entrant's complementarities.

\begin{example}[Entry may destroy welfare decentralization]
\label{ex:fragility}
Let $A=\{a,b\}$ and suppose there is one incumbent with valuation
\[
v(\{a\})=v(\{b\})=0,
\qquad
v(\{a,b\})=1.
\]
The incumbent economy has a competitive equilibrium at zero prices, with
both goods assigned to the incumbent.

Now add a unit-demand entrant with
\[
w(\{a\})=w(\{b\})=w(\{a,b\})=\frac34.
\]
The unique efficient allocation assigns both goods to the incumbent. If this
allocation were supported by prices $p$, the entrant's demand for the empty
bundle would require
\[
p_a\geq\frac34
\qquad\text{and}\qquad
p_b\geq\frac34.
\]
The incumbent's demand for $\{a,b\}$ over the empty bundle would require
\[
p_a+p_b\leq1,
\]
a contradiction. Hence entry destroys the possibility of decentralizing the
efficient allocation through item prices.
\end{example}

\section{Two preliminary welfare observations}
\label{sec:preliminaries}

We first record two facts that allow us to move between the individual economy
and its aggregate welfare representation.

\begin{lemma}[Common support of efficient allocations]
\label{lem:common-support}
Suppose $(p,\mathbf X)$ is a competitive equilibrium of a finite economy.
Then $\mathbf X$ is efficient. Moreover, if $\mathbf Y$ is any other
efficient allocation, then $(p,\mathbf Y)$ is also a competitive equilibrium.
\end{lemma}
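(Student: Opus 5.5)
This is the First Welfare Theorem together with the standard "common support" argument. Both parts come from summing the individual demand inequalities and using the fact that every allocation exhausts the same good set $A$. Throughout, fix a competitive equilibrium $(p,\mathbf X)$ of a finite economy with agent set $N$.

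\emph{Efficiency of $\mathbf X$.} Let $\mathbf Z=(Z_i)_{i\in N}$ be an arbitrary allocation. Since $X_i\in D_{v_i}(p)$, each agent satisfies
\[
v_i(X_i)-p(X_i)\geq v_i(Z_i)-p(Z_i).
\]
Summing over $i\in N$ gives
\[
\sum_{i\in N}v_i(X_i)-\sum_{i\in N}p(X_i)\geq\sum_{i\in N}v_i(Z_i)-\sum_{i\in N}p(Z_i).
\]
Both $\mathbf X$ and $\mathbf Z$ partition $A$, so $\sum_{i\in N}p(X_i)=p(A)=\sum_{i\in N}p(Z_i)$. The price terms therefore cancel, and we get $\sum_{i\in N}v_i(X_i)\geq\sum_{i\in N}v_i(Z_i)$. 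Taking the maximum over $\mathbf Z$ yields $\sum_{i\in N}v_i(X_i)=v_N(A)$, so $\mathbf X$ is efficient.

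\emph{Common support.} Now let $\mathbf Y$ be any efficient allocation, so $\sum_{i\in N}v_i(Y_i)=v_N(A)=\sum_{i\in N}v_i(X_i)$. Define the slack
\[
\delta_i:=\bigl[v_i(X_i)-p(X_i)\bigr]-\bigl[v_i(Y_i)-p(Y_i)\bigr].
\]
Demand optimality of $X_i$ gives $\delta_i\geq 0$ for every $i\in N$. On the other hand,
\[
\sum_{i\in N}\delta_i=\Bigl[\sum_{i\in N}v_i(X_i)-\sum_{i\in N}v_i(Y_i)\Bigr]-\bigl[p(A)-p(A)\bigr]=0,
\]
using equal welfare and the partition property once more. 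A sum of nonnegative terms equal to zero forces $\delta_i=0$ for each $i$. Hence $Y_i$ attains the same utility as the maximizer $X_i$, so $Y_i\in D_{v_i}(p)$. Since $\mathbf Y$ is an allocation, $(p,\mathbf Y)$ is a competitive equilibrium.

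There is no real obstacle here. The only point needing care is that every allocation partitions all of $A$, including goods that go unwanted. This is what makes total expenditure equal to $p(A)$ independently of the allocation, and it is why the price terms cancel in both steps.
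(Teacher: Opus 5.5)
Your proposal is correct and follows essentially the same argument as the paper. Both sum the individual demand inequalities, cancel the price terms because every allocation partitions $A$, and then argue that equal total welfare forces each weak inequality to bind, so that $Y_i\in D_{v_i}(p)$.
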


\begin{proof}
Let $\mathbf Y=(Y_i)_{i\in N}$ be any allocation. Since
$X_i\in D_{v_i}(p)$,
\[
v_i(X_i)-p(X_i)
\geq
v_i(Y_i)-p(Y_i)
\qquad
\text{for every }i\in N.
\]
Summing over agents and using
\[
\sum_{i\in N}p(X_i)
=
p(A)
=
\sum_{i\in N}p(Y_i)
\]
gives
\[
\sum_{i\in N}v_i(X_i)
\geq
\sum_{i\in N}v_i(Y_i).
\]
Thus $\mathbf X$ is efficient.

If $\mathbf Y$ is also efficient, the last inequality is an equality. The
individual demand inequalities are all weak and their sum is zero, so each is
an equality. Hence
\[
Y_i\in D_{v_i}(p)
\qquad
\text{for every }i\in N,
\]
and $(p,\mathbf Y)$ is a competitive equilibrium.
\end{proof}

The second lemma shows that a competitive equilibrium of an augmented economy
induces one in which the incumbent population is represented by $v_N$.

\begin{lemma}[Aggregation]
\label{lem:aggregation}
Let $w$ be an entrant valuation. If $\M\oplus w$ admits a competitive
equilibrium at prices $p$, then the two-agent economy with valuations $v_N$
and $w$ also admits a competitive equilibrium at prices $p$.
\end{lemma}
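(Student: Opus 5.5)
Let $(p,\mathbf X)$ be a competitive equilibrium of $\M\oplus w$, with $\mathbf X=((X_i)_{i\in N},X_*)$. The plan is to show that the two-bundle allocation $(X_N,X_*)$, where $X_N:=\bigcup_{i\in N}X_i=A\setminus X_*$, is a competitive equilibrium of the two-agent economy with valuations $v_N$ and $w$ at the same prices $p$. The entrant's demand condition $X_*\in D_w(p)$ carries over verbatim, since the entrant's valuation and the price vector are unchanged. Note also that $v_N$ is normalized and monotone, so the two-agent economy is a legitimate economy of the type considered, although only the demand condition really matters.

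The substantive step is to show $X_N\in D_{v_N}(p)$, and I will do this in two parts. First, $v_N(X_N)=\sum_{i\in N}v_i(X_i)$. Take any partition $(B_i)_{i\in N}$ of $X_N$. Then $((B_i)_{i\in N},X_*)$ is an allocation of $A$ in $\M\oplus w$. By Lemma~\ref{lem:common-support} the allocation $\mathbf X$ is efficient in $\M\oplus w$, so
\[
\sum_{i\in N} v_i(X_i)+w(X_*)\geq \sum_{i\in N} v_i(B_i)+w(X_*).
\]
Hence $(X_i)_{i\in N}$ attains the maximum in \eqref{eq:social-valuation} for $B=X_N$. Alternatively, one can sum the incumbents' demand inequalities directly, since $p(X_N)=\sum_i p(B_i)$.

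Second, for an arbitrary bundle $C\subseteq A$, choose a partition $(C_i)_{i\in N}$ of $C$ attaining $v_N(C)$. Each incumbent's demand condition gives $v_i(X_i)-p(X_i)\geq v_i(C_i)-p(C_i)$. Summing over $i\in N$ and using $p(C)=\sum_i p(C_i)$ together with the first part yields
\[
v_N(X_N)-p(X_N)\geq v_N(C)-p(C).
\]
This is exactly $X_N\in D_{v_N}(p)$. Demand ranges over all of $2^A$, so $C$ may overlap $X_*$; this causes no trouble, because individual demand is also defined over all of $A$, and the partition $(C_i)$ consists of subsets of $A$.

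I do not expect a real obstacle here. The only point needing care is the second step, where the demand comparison must be made against an optimal partition of an arbitrary $C$ rather than a partition of $X_N$. Summing the individual inequalities handles this cleanly, and no efficiency argument is needed there.
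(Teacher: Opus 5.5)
Your proof is correct and follows the paper's argument. For an arbitrary bundle you pick a welfare-maximizing partition, sum the incumbents' demand inequalities, and carry over the entrant's demand condition unchanged. The only difference is that your first step proves the equality $v_N(X_N)=\sum_{i\in N}v_i(X_i)$ via efficiency. The paper needs only the inequality $v_N(X_N)\geq\sum_{i\in N}v_i(X_i)$, which is immediate from the definition of $v_N$ because $(X_i)_{i\in N}$ partitions $X_N$.
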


\begin{proof}
Let
\[
\bigl(p,(X_i)_{i\in N},Y\bigr)
\]
be a competitive equilibrium of $\M\oplus w$, and set
\[
X_N
:=
\bigcup_{i\in N}X_i
=
A\setminus Y.
\]
We show that $X_N\in D_{v_N}(p)$.

Fix any $B\subseteq A$, and choose a partition $(B_i)_{i\in N}$ of $B$
attaining $v_N(B)$. By the definition of $v_N$ and the incumbent demand
inequalities,
\begin{align*}
v_N(X_N)-p(X_N)
&\geq
\sum_{i\in N}\bigl[v_i(X_i)-p(X_i)\bigr]
\\
&\geq
\sum_{i\in N}\bigl[v_i(B_i)-p(B_i)\bigr]
\\
&=
v_N(B)-p(B).
\end{align*}
Thus $X_N\in D_{v_N}(p)$. Since $Y\in D_w(p)$, the allocation $(X_N,Y)$
and prices $p$ form a competitive equilibrium of the representative two-agent
economy.
\end{proof}

\section{The canonical welfare-loss entrant}
\label{sec:canonical}

For a normalized monotone valuation $v$, define its \emph{complementary dual}
by
\begin{equation}
\label{eq:dual}
v^{\dagger}(B)
:=
v(A)-v(A\setminus B)
\qquad
\text{for every }B\subseteq A.
\end{equation}
The valuation $v^{\dagger}$ is normalized and monotone. It evaluates a bundle
by the loss in value caused by removing that bundle from the full resource
set.

The next proposition shows that one complementary-dual entrant is sufficient
to test whether an aggregate welfare valuation can be decentralized robustly.

\begin{proposition}[Canonical welfare-loss test]
\label{prop:stress-test}
Let $v$ be a normalized monotone valuation. The two-agent economy with
valuations $v$ and $v^{\dagger}$ admits a competitive equilibrium if and only
if $v$ is additive.
\end{proposition}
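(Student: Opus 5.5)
The plan is to exploit the fact that in the two-agent economy with valuations $v$ and $v^{\dagger}$, every division of the goods is efficient. For any $B\subseteq A$, assign $A\setminus B$ to the $v$-agent and $B$ to the $v^{\dagger}$-agent. Total welfare is then
\[
v(A\setminus B)+v^{\dagger}(B)=v(A\setminus B)+v(A)-v(A\setminus B)=v(A),
\]
which is independent of $B$. So every allocation of the two-agent economy attains the same welfare, and in particular every allocation is efficient. (The maximal welfare cannot exceed $v(A)$, since all allocations give exactly $v(A)$.)

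For necessity, suppose $(p,(X,Y))$ is a competitive equilibrium of this two-agent economy. Lemma~\ref{lem:common-support} says that $p$ supports every efficient allocation. Since every allocation is efficient, $(p,(B,A\setminus B))$ is a competitive equilibrium for every $B\subseteq A$, with $B$ going to the $v$-agent. Hence $B\in D_v(p)$ for every $B$, so $v(B)-p(B)$ is constant in $B$. Evaluating at $B=\emptybundle$ and using normalization, the constant is $0$, so $v(B)=p(B)$ for all $B$.

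It remains to check that $p\in\R_+^A$, as the definition of additivity requires. Monotonicity gives $p_a=v(\{a\})\geq v(\emptybundle)=0$. Thus $v$ is additive with $q=p$.

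For sufficiency, suppose $v=q(\cdot)$ with $q\in\R_+^A$. Then
\[
v^{\dagger}(B)=q(A)-q(A\setminus B)=q(B),
\]
so both agents have the same additive valuation $q$. At prices $p=q$, each agent's utility $v(B)-q(B)$ is identically zero, so every bundle is demanded. Any allocation, say $(A,\emptybundle)$, together with prices $q$ is therefore a competitive equilibrium.

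I expect no step to be a serious obstacle. The only points needing care are the welfare-constancy identity, which is what lets Lemma~\ref{lem:common-support} force the $v$-agent to demand every bundle, and the nonnegativity of the resulting price vector, which follows from monotonicity.
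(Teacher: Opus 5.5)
Your proof is correct and follows the paper's argument: welfare constancy across all divisions plus Lemma~\ref{lem:common-support} for necessity, and the identity $v^{\dagger}=q$ for sufficiency. The only variation is in the necessity step. You apply the lemma to every division and read off $v=p$ from the $v$-agent demanding every bundle. The paper instead uses both agents' demand inequalities at the single allocation $(A,\emptybundle)$ to obtain $p(B)=v(A)-v(A\setminus B)$, an identity it reuses in Corollary~\ref{cor:uniform-price}, and then sets $B=A$. Your explicit check that $p\in\R_+^A$ is a small point the paper leaves implicit.
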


\begin{proof}
Suppose first that the two-agent economy admits a competitive equilibrium at
prices $p$. For every $S\subseteq A$, the allocation assigning $S$ to the
first agent and $A\setminus S$ to the second has welfare
\[
v(S)+v^{\dagger}(A\setminus S)
=
v(S)+v(A)-v(S)
=
v(A).
\]
Thus every allocation is efficient. By Lemma~\ref{lem:common-support}, the
same prices $p$ support the allocation $(A,\emptybundle)$.

Fix $B\subseteq A$. Since the second agent demands the empty bundle,
\[
0
\geq
v^{\dagger}(B)-p(B),
\]
and hence
\begin{equation}
\label{eq:lower-price}
p(B)
\geq
v(A)-v(A\setminus B).
\end{equation}
Since the first agent demands $A$ rather than $A\setminus B$,
\[
v(A)-p(A)
\geq
v(A\setminus B)-p(A\setminus B),
\]
which implies
\begin{equation}
\label{eq:upper-price}
p(B)
\leq
v(A)-v(A\setminus B).
\end{equation}
Combining \eqref{eq:lower-price} and \eqref{eq:upper-price},
\begin{equation}
\label{eq:price-dual}
p(B)
=
v(A)-v(A\setminus B)
\qquad
\text{for every }B\subseteq A.
\end{equation}

Taking $B=A$ gives $p(A)=v(A)$. Therefore, for every $S\subseteq A$,
\[
v(S)
=
v(A)-p(A\setminus S)
=
p(A)-p(A\setminus S)
=
p(S).
\]
Since item prices are additive across goods, $v$ is additive.

Conversely, suppose that $v$ is additive and write
\[
v(B)=q(B)
\qquad
\text{for every }B\subseteq A.
\]
Then
\[
v^{\dagger}(B)
=
q(A)-q(A\setminus B)
=
q(B).
\]
At prices $q$, every bundle gives utility zero to both agents. Hence every
partition of $A$ is a competitive equilibrium allocation.
\end{proof}

For the incumbent economy $\M$, define the \emph{canonical welfare-loss
entrant} by
\begin{equation}
\label{eq:canonical-entrant}
w^{\M}(B)
:=
v_N^{\dagger}(B)
=
v_N(A)-v_N(A\setminus B).
\end{equation}
For every bundle $B$, $w^{\M}(B)$ is exactly the maximal welfare that the
incumbent population forgoes when the entrant receives $B$.

\section{Universal entry robustness}
\label{sec:characterization}

We now characterize the incumbent economies whose efficient allocations can
be decentralized after every possible one-agent extension.

\begin{theorem}[Universal entry robustness]
\label{thm:main}
Let
\[
\M
=
\langle N,A;(v_i)_{i\in N}\rangle
\]
be a finite economy with normalized monotone valuations. The following
statements are equivalent:
\begin{enumerate}
\item[(i)]
$\M$ is universally entry-robust.

\item[(ii)]
The aggregate welfare valuation $v_N$ is additive.

\item[(iii)]
The augmented economy $\M\oplus w^{\M}$ admits a competitive equilibrium.

\item[(iv)]
There exists a price vector $q\in\R_+^A$ such that $\M$ admits a competitive
equilibrium at $q$ and, for every normalized monotone entrant valuation $w$,
the augmented economy $\M\oplus w$ admits a competitive equilibrium at the
same prices $q$.
\end{enumerate}
When these conditions hold, the common price vector in \textup{(iv)} can be
chosen as
\[
q_a=v_N(\{a\})
\qquad
\text{for every }a\in A.
\]
\end{theorem}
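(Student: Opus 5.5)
I will prove the cycle (iv)$\Rightarrow$(i)$\Rightarrow$(iii)$\Rightarrow$(ii)$\Rightarrow$(iv). The step (ii)$\Rightarrow$(iv) will use $q_a=v_N(\{a\})$ explicitly, which also gives the final claim about the choice of prices. The implications (iv)$\Rightarrow$(i) and (i)$\Rightarrow$(iii) are immediate from Definition~\ref{def:robustness}. For the latter, the only thing to check is that $w^{\M}=v_N^{\dagger}$ is an admissible entrant valuation, i.e.\ normalized and monotone. This holds because $v_N$ is normalized and monotone: it is a maximum of sums of monotone $v_i$ over partitions, and enlarging $B$ lets one enlarge some $B_i$. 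The remark after \eqref{eq:dual} then applies.

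For (iii)$\Rightarrow$(ii), take a competitive equilibrium of $\M\oplus w^{\M}$ at prices $p$. Lemma~\ref{lem:aggregation} yields a competitive equilibrium at $p$ of the two-agent economy with valuations $v_N$ and $v_N^{\dagger}$. Proposition~\ref{prop:stress-test}, applied to the normalized monotone valuation $v=v_N$, then shows that $v_N$ is additive.

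For (ii)$\Rightarrow$(iv), suppose $v_N(B)=q(B)$ for all $B$, with $q\in\R_+^A$. Evaluating at singletons gives $q_a=v_N(\{a\})$, so $q$ is forced and nonnegative. Fix any normalized monotone $w$, including the case of no entrant, which is handled by the same argument with $Y=\emptybundle$. Pick $Y\in D_w(p)$ at $p=q$, and let $(X_i)_{i\in N}$ be a partition of $A\setminus Y$ attaining $v_N(A\setminus Y)$. The entrant demands $Y$ by construction. The key step is to show $X_i\in D_{v_i}(q)$ for each incumbent $i$.

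This key step is the main obstacle: incumbents' individual valuations are not additive, so demand optimality must come from aggregate additivity. I will argue by contradiction. Suppose some incumbent $j$ strictly prefers a bundle $C$ at prices $q$, i.e.\ $v_j(C)-q(C)>v_j(X_j)-q(X_j)$. Additivity gives $\sum_{i\in N}[v_i(X_i)-q(X_i)]=v_N(A\setminus Y)-q(A\setminus Y)=0$. I first need each incumbent's individual surplus to satisfy $v_i(B)-q(B)\le 0$ for every $B$. This follows since $v_i(B)\le v_N(B)=q(B)$, where the inequality uses the partition giving $B$ to $i$ and $\emptybundle$ to the others, together with normalization. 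Next, each $v_i(X_i)-q(X_i)\le 0$ and these terms sum to $0$, so each equals $0$. Hence every $X_i$ attains the maximal possible utility $0$ and lies in $D_{v_i}(q)$, which is the required contradiction. Thus $(q,(X_i),Y)$ is a competitive equilibrium of $\M\oplus w$.

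With $Y=\emptybundle$ the same argument gives an equilibrium of $\M$ at $q$, so (iv) holds with $q_a=v_N(\{a\})$. I will finish by noting the welfare identity $w(Y)+v_N(A\setminus Y)=q(A)+w(Y)-q(Y)$, which confirms consistency with Lemma~\ref{lem:common-support}; the proof itself does not need it.
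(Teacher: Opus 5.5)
Your proposal is correct and follows the paper's proof essentially step for step: the same cycle of implications, the same bound $v_i(B)\le v_N(B)=q(B)$ forcing every incumbent's surplus at $q$ to vanish on an optimal partition of $A\setminus Y$, and the same reduction of (iii)$\Rightarrow$(ii) to Lemma~\ref{lem:aggregation} plus Proposition~\ref{prop:stress-test}. One difference is cosmetic: your ``contradiction'' in the key step is really a direct argument and is cleaner stated that way. The other is a small addition: you check explicitly that $w^{\M}$ is normalized and monotone, which the paper leaves implicit.
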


\begin{proof}
We prove
\[
\text{(ii)}
\Longrightarrow
\text{(iv)}
\Longrightarrow
\text{(i)}
\Longrightarrow
\text{(iii)}
\Longrightarrow
\text{(ii)}.
\]

Suppose first that \textup{(ii)} holds. Let $q\in\R_+^A$ satisfy
\[
v_N(B)=q(B)
\qquad
\text{for every }B\subseteq A.
\]
Necessarily,
\[
q_a=v_N(\{a\})
\qquad
\text{for every }a\in A.
\]

For each incumbent $i$ and each bundle $B$,
\begin{equation}
\label{eq:individual-upper-bound}
v_i(B)
\leq
v_N(B)
=
q(B),
\end{equation}
because assigning $B$ to agent $i$ and the empty bundle to every other
incumbent is feasible in the definition of $v_N(B)$.

Choose a partition $(X_i)_{i\in N}$ of $A$ attaining $v_N(A)$. By
\eqref{eq:individual-upper-bound}, each difference
$q(X_i)-v_i(X_i)$ is nonnegative, while
\[
\sum_{i\in N}\bigl[q(X_i)-v_i(X_i)\bigr]
=
q(A)-v_N(A)
=
0.
\]
Hence
\[
v_i(X_i)=q(X_i)
\qquad
\text{for every }i\in N.
\]
Together with \eqref{eq:individual-upper-bound}, this implies
\[
v_i(X_i)-q(X_i)
=
0
\geq
v_i(B)-q(B)
\qquad
\text{for every }B\subseteq A.
\]
Thus
\[
\bigl(q,(X_i)_{i\in N}\bigr)
\]
is a competitive equilibrium of $\M$.

Now fix an arbitrary normalized monotone entrant valuation $w$. Choose
\[
Y\in D_w(q)
\]
and let
\[
R:=A\setminus Y.
\]
Choose a partition $(Y_i)_{i\in N}$ of $R$ attaining $v_N(R)$. As before,
\[
\sum_{i\in N}\bigl[q(Y_i)-v_i(Y_i)\bigr]
=
q(R)-v_N(R)
=
0,
\]
and every term is nonnegative by \eqref{eq:individual-upper-bound}.
Therefore,
\[
v_i(Y_i)=q(Y_i)
\quad\text{and}\quad
Y_i\in D_{v_i}(q)
\qquad
\text{for every }i\in N.
\]
Since $Y\in D_w(q)$, the allocation
\[
\bigl((Y_i)_{i\in N},Y\bigr)
\]
and prices $q$ form a competitive equilibrium of $\M\oplus w$. This proves
\textup{(iv)}.

The implication
\[
\text{(iv)}\Longrightarrow\text{(i)}
\]
follows directly from Definition~\ref{def:robustness}, and
\[
\text{(i)}\Longrightarrow\text{(iii)}
\]
follows by choosing the entrant valuation $w=w^{\M}$.

Finally, suppose \textup{(iii)} holds. By Lemma~\ref{lem:aggregation}, the
two-agent economy with valuations $v_N$ and
\[
w^{\M}=v_N^{\dagger}
\]
admits a competitive equilibrium. Proposition~\ref{prop:stress-test} then
implies that $v_N$ is additive. Hence
\[
\text{(iii)}\Longrightarrow\text{(ii)}.
\]
\end{proof}

Theorem~\ref{thm:main} yields a one-entrant test for a requirement quantified
over all entrants.

\begin{corollary}[One-entrant test]
\label{cor:one-test}
The economy $\M$ is universally entry-robust if and only if
$\M\oplus w^{\M}$ admits a competitive equilibrium.
\end{corollary}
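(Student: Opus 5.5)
The corollary is the equivalence (i)$\Leftrightarrow$(iii) in Theorem~\ref{thm:main}. I would simply read it off the chain of implications
\[
\text{(ii)}\Longrightarrow\text{(iv)}\Longrightarrow\text{(i)}\Longrightarrow\text{(iii)}\Longrightarrow\text{(ii)}
\]
established there.

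For the forward direction, universal entry robustness quantifies over every normalized monotone entrant valuation. So it suffices to check that $w^{\M}=v_N^{\dagger}$ is such a valuation. It is normalized because $w^{\M}(\emptybundle)=v_N(A)-v_N(A)=0$. It is monotone because $B\subseteq C$ implies $A\setminus C\subseteq A\setminus B$, and $v_N$ is monotone. The monotonicity of $v_N$ in turn follows from the monotonicity of each $v_i$: given an optimal partition of a smaller set, add the extra goods to any single incumbent's bundle. Nonnegativity of $w^{\M}$ follows in the same way. Specializing Definition~\ref{def:robustness} to $w=w^{\M}$ then gives (iii).

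For the converse, assume $\M\oplus w^{\M}$ has a competitive equilibrium. Lemma~\ref{lem:aggregation} transfers this to the two-agent economy with valuations $v_N$ and $v_N^{\dagger}$. Proposition~\ref{prop:stress-test}, applied with $v=v_N$, then shows that $v_N$ is additive. Applying the implication (ii)$\Rightarrow$(iv)$\Rightarrow$(i) of Theorem~\ref{thm:main} yields universal entry robustness, including existence of an equilibrium of $\M$ itself at the prices $q_a=v_N(\{a\})$.

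There is no substantive obstacle, since all the work is done in the theorem. The only point needing care is to confirm that $w^{\M}$ is an admissible entrant, meaning normalized, monotone and $\R_+$-valued, so that it lies in the class quantified over in Definition~\ref{def:robustness}.
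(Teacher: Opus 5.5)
Your proposal is correct and follows the paper's route: the corollary is the equivalence (i)$\Leftrightarrow$(iii) of Theorem~\ref{thm:main}, obtained by specializing Definition~\ref{def:robustness} to $w=w^{\M}$ in one direction, and by Lemma~\ref{lem:aggregation}, Proposition~\ref{prop:stress-test}, and (ii)$\Rightarrow$(iv)$\Rightarrow$(i) in the other. Your explicit check that $v_N$ is monotone fills in a detail the paper only asserts; this is what makes $w^{\M}$ an admissible normalized, monotone, nonnegative entrant.
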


The common price system is uniquely determined by the aggregate welfare
valuation and can itself be identified from the canonical extension.

\begin{corollary}[Uniform and canonical prices]
\label{cor:uniform-price}
Suppose that $\M$ is universally entry-robust. Then
\[
q_a=v_N(\{a\})
\qquad
\text{for every }a\in A
\]
is the unique price vector with the following property: $\M$ admits a
competitive equilibrium at $q$, and, for every normalized monotone entrant
valuation $w$, the augmented economy $\M\oplus w$ admits a competitive
equilibrium at the same prices $q$.

Moreover, every competitive equilibrium of the canonical extension
$\M\oplus w^{\M}$ has price vector $q$.
\end{corollary}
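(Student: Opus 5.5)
Existence of the property at $q_a=v_N(\{a\})$ is already given by Theorem~\ref{thm:main}, so the plan has three parts. First I will prove uniqueness among price vectors with the uniform property. Then I will show that every equilibrium price of the canonical extension equals $q$. The first part in fact follows from the second. Since $\M$ is universally entry-robust, Theorem~\ref{thm:main} makes $v_N$ additive, $v_N(B)=q(B)$ for all $B$. Any price vector $p$ with the uniform property must in particular support a competitive equilibrium of $\M\oplus w^{\M}$. So it suffices to prove the ``moreover'' statement.

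Let $p$ be the price vector of some competitive equilibrium of $\M\oplus w^{\M}$. By Lemma~\ref{lem:aggregation}, the two-agent economy with valuations $v_N$ and $v_N^{\dagger}$ has a competitive equilibrium at the same prices $p$. I then reuse the first half of the proof of Proposition~\ref{prop:stress-test}, which was derived for an arbitrary equilibrium price of that two-agent economy. It yields
\[
p(B)=v_N(A)-v_N(A\setminus B)\qquad\text{for every }B\subseteq A .
\]
Taking $B=\{a\}$ and using additivity of $v_N$ gives
\[
p_a=q(A)-q(A\setminus\{a\})=q_a .
\]
Hence $p=q$.

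The only point needing care is that Proposition~\ref{prop:stress-test} is stated as an existence equivalence, whereas I use the price identity \eqref{eq:price-dual} from inside its proof. That identity holds for any equilibrium price $p$ of the two-agent economy. Its derivation used only that all divisions are efficient and Lemma~\ref{lem:common-support}, so citing it is legitimate.

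I also need that $v_N$ is normalized and monotone, so that $v_N^{\dagger}$ is a valid entrant valuation. This is immediate because monotonicity of the $v_i$ passes to $v_N$. I do not expect any serious obstacle.
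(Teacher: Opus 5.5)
Your proposal is correct and follows essentially the same route as the paper. Both arguments pass from a competitive equilibrium of $\M\oplus w^{\M}$ to one of the representative economy $(v_N,v_N^{\dagger})$ via Lemma~\ref{lem:aggregation}, then invoke the price identity \eqref{eq:price-dual} from inside the proof of Proposition~\ref{prop:stress-test}; the only cosmetic difference is that you evaluate that identity at singletons using the additivity supplied by Theorem~\ref{thm:main}, whereas the paper reads off $p(B)=v_N(B)$ directly from the continuation of the proposition's proof.
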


\begin{proof}
Existence follows from Theorem~\ref{thm:main}. Let $p$ be any price vector
that supports the incumbent economy and every one-agent extension. In
particular, $p$ supports the canonical extension $\M\oplus w^{\M}$. By
Lemma~\ref{lem:aggregation}, $p$ supports a competitive equilibrium in the
two-agent economy $(v_N,v_N^{\dagger})$. Applying
\eqref{eq:price-dual} to $v_N$ gives
\[
p(B)=v_N(B)
\qquad
\text{for every }B\subseteq A.
\]
Hence $p_a=v_N(\{a\})=q_a$ for every $a\in A$, proving uniqueness. The same
argument applies to the price vector of any competitive equilibrium of the
canonical extension.
\end{proof}

\section{Welfare interpretation}
\label{sec:welfare}

This section records three implications of the characterization for welfare
aggregation and decentralization.

\subsection{A social rather than individual restriction}

Universal entry robustness imposes no common domain restriction on the
individual incumbent valuations. The condition is instead imposed on their
aggregate welfare opportunities. Individual complementarities can offset one
another through reassignment across incumbents, leaving a linear aggregate
valuation.

\begin{example}[Nonadditive individuals and additive aggregate welfare]
\label{ex:aggregate-additivity}
Let $A=\{a,b\}$ and $N=\{1,2\}$. Consider
\[
\begin{array}{c|ccc}
 & \{a\} & \{b\} & \{a,b\} \\
\hline
v_1 & 1 & 1 & 1 \\
v_2 & 0 & 0 & 2
\end{array}
\]
with both valuations equal to zero at the empty bundle. Neither valuation is
additive. Nevertheless,
\[
v_N(\{a\})=1,
\qquad
v_N(\{b\})=1,
\qquad
v_N(\{a,b\})=2.
\]
Thus $v_N(B)=q(B)$ at $q=(1,1)$. By
Theorem~\ref{thm:main}, the economy is universally entry-robust, and the same
prices $(1,1)$ support a competitive equilibrium after every possible
single-agent entry.
\end{example}

The example shows that universal entry robustness is not a disguised form of
individual additivity. It is generated jointly by heterogeneity and the
possibility of reallocating goods within the incumbent population.

\subsection{Entry as a welfare decomposition problem}

Suppose that $v_N=q$ is additive. For an entrant valuation $w$, the greatest
welfare attainable after entry is
\begin{align*}
\max_{Y\subseteq A}
\bigl\{w(Y)+v_N(A\setminus Y)\bigr\}
&=
\max_{Y\subseteq A}
\bigl\{w(Y)+q(A\setminus Y)\bigr\}
\\
&=
q(A)
+
\max_{Y\subseteq A}
\bigl\{w(Y)-q(Y)\bigr\}.
\end{align*}
Therefore, every entrant demand $Y\in D_w(q)$ identifies an efficient division
of resources between the entrant and the incumbent population. Any incumbent
allocation attaining $v_N(A\setminus Y)$ completes this division into an
efficient allocation among all agents.

\begin{corollary}[Robust welfare decomposition]
\label{cor:welfare-decomposition}
Suppose $v_N$ is additive and represented by $q$. For every entrant valuation
$w$ and every $Y\in D_w(q)$,
\[
w(Y)+v_N(A\setminus Y)
=
\max_{B\subseteq A}
\bigl\{w(B)+v_N(A\setminus B)\bigr\}.
\]
Moreover, every incumbent partition of $A\setminus Y$ attaining
$v_N(A\setminus Y)$ can be combined with $Y$ to form a competitive and
efficient allocation at prices $q$.
\end{corollary}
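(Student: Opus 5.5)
The first claim is a direct welfare computation, and the second reuses the sufficiency half of the proof of Theorem~\ref{thm:main}. For the first claim, I fix $w$ and a demanded bundle $Y\in D_w(q)$. Because $v_N=q$ is additive, for every $B\subseteq A$ I have $v_N(A\setminus B)=q(A)-q(B)$. This gives
\[
w(B)+v_N(A\setminus B)=q(A)+\bigl[w(B)-q(B)\bigr].
\]
Since $Y$ maximizes $w(B)-q(B)$ over $B\subseteq A$, the right-hand side is maximized at $B=Y$. That is exactly the displayed equality. This part is pure bookkeeping and uses only additivity and the definition of $D_w(q)$.

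For the second claim, I take any partition $(Y_i)_{i\in N}$ of $R:=A\setminus Y$ with $\sum_i v_i(Y_i)=v_N(R)$. I then repeat the argument in the proof of Theorem~\ref{thm:main}. First, $v_i(B)\le v_N(B)=q(B)$ for every $i$ and every $B$, since giving $B$ entirely to $i$ is feasible in the definition of $v_N(B)$. Second, the nonnegative terms $q(Y_i)-v_i(Y_i)$ sum to $q(R)-v_N(R)=0$, so each term is zero. Hence $v_i(Y_i)-q(Y_i)=0\ge v_i(B)-q(B)$ for all $B$, so $Y_i\in D_{v_i}(q)$. Together with $Y\in D_w(q)$, this shows that $\bigl(q,((Y_i)_{i\in N},Y)\bigr)$ is a competitive equilibrium of $\M\oplus w$.

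Efficiency then follows in either of two ways. One is to apply Lemma~\ref{lem:common-support} to this equilibrium. The other is direct: the total welfare is $w(Y)+v_N(A\setminus Y)$, which by the first claim equals $\max_B\{w(B)+v_N(A\setminus B)\}$. This maximum is the maximal welfare of $\M\oplus w$, because for any allocation the incumbents' part is bounded by $v_N$ of their union. I do not expect a real obstacle. The only point needing care is to note that the choice of incumbent partition is arbitrary among maximizers, and the argument above uses nothing about which maximizer is chosen.
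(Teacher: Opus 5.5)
Your proposal is correct and follows essentially the same route as the paper: the identity $w(B)+v_N(A\setminus B)=q(A)+\bigl[w(B)-q(B)\bigr]$ gives the welfare equality, and the second claim is the (ii)$\Rightarrow$(iv) argument from the proof of Theorem~\ref{thm:main}, which indeed works for any partition of $A\setminus Y$ attaining $v_N(A\setminus Y)$. Both of your routes to efficiency of the augmented allocation are valid: Lemma~\ref{lem:common-support}, or the direct bound of incumbent welfare by $v_N$.
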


Thus the common price vector does two jobs. It decentralizes the internal
allocation problem of the incumbent population and determines the
welfare-maximizing boundary between the incumbent population and every
possible entrant.

\subsection{The canonical entrant as an opportunity-cost test}

The canonical entrant $w^{\M}$ values each bundle at the welfare loss imposed
on the incumbents. Consequently,
\[
v_N(A\setminus B)+w^{\M}(B)=v_N(A)
\qquad
\text{for every }B\subseteq A.
\]
Every division between the representative incumbent population and the
canonical entrant is therefore socially indifferent. The test asks whether a
single anonymous item-price vector can decentralize this entire set of
welfare-equivalent divisions. The answer is affirmative exactly when the
aggregate welfare valuation itself is generated by itemwise social
opportunity costs.

This also clarifies the strength of universal entry robustness. Ordinary
equilibrium existence requires price support for at least one efficient
allocation in the realized economy. Universal entry robustness requires the
incumbent welfare technology to remain price-decentralizable after every
possible change in the boundary between incumbent and entrant consumption.
Aggregate additivity is precisely the condition that makes this possible.

The mechanism can be summarized as follows. The canonical entrant makes every
boundary between incumbent and entrant consumption welfare-equivalent.
Universal price decentralization is therefore possible only if the
incumbents' welfare loss from transferring a bundle is itself representable
by anonymous item prices. This is exactly aggregate additivity.

\section{Relation to existing equilibrium-existence characterizations}
\label{sec:existing}

The proof of Theorem~\ref{thm:main} is intentionally direct. It uses only
individual demand inequalities, efficiency, and the welfare interpretation of
the canonical entrant. This section explains how the same characterization
is represented in broader equilibrium-existence frameworks and clarifies what
the direct argument adds.

\subsection{A robust-integrality interpretation}

\citet{BikhchandaniMamer1997} show that a fixed economy with indivisible goods
admits a competitive equilibrium if and only if the linear relaxation of its
configuration problem has an integral optimal solution. Combining their
characterization with Proposition~\ref{prop:stress-test} and
Theorem~\ref{thm:main} yields an equivalent formulation of our result.

\begin{corollary}[Robust integrality]
\label{cor:robust-integrality}
Let $v_N$ be the incumbents' aggregate welfare valuation. The following
statements are equivalent:
\begin{enumerate}
\item[(i)]
$v_N$ is additive.

\item[(ii)]
For every normalized monotone valuation $w$, the configuration problem of the
representative two-agent economy $(v_N,w)$ has an integral optimal solution.

\item[(iii)]
The configuration problem of $(v_N,v_N^{\dagger})$ has an integral optimal
solution.
\end{enumerate}
\end{corollary}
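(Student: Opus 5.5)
The plan is to translate each statement into competitive-equilibrium language via the theorem of \citet{BikhchandaniMamer1997}. For any fixed economy, that theorem says the configuration problem has an integral optimal solution if and only if the economy admits a competitive equilibrium. Applied to the representative two-agent economies, statement (ii) becomes: for every normalized monotone $w$, the economy $(v_N,w)$ admits a competitive equilibrium. Statement (iii) becomes: the economy $(v_N,v_N^{\dagger})$ admits a competitive equilibrium. We then prove (i)$\Rightarrow$(ii)$\Rightarrow$(iii)$\Rightarrow$(i) in this equilibrium form.

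For (i)$\Rightarrow$(ii), suppose $v_N=q$. Given any $w$, pick $Y\in D_w(q)$. At prices $q$, the first agent's utility $v_N(B)-q(B)$ is identically zero, so the first agent demands every bundle, in particular $A\setminus Y$. Hence $(q,(A\setminus Y,Y))$ is a competitive equilibrium of $(v_N,w)$. An alternative is to apply Theorem~\ref{thm:main} to the single-incumbent economy $\langle\{1\},A;v_N\rangle$: its aggregate welfare valuation is $v_N$ itself, and $v_N$ is normalized and monotone because each $v_i$ is. That theorem then gives an equilibrium of $(v_N,w)$ for every $w$, but the direct argument is shorter.

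The step (ii)$\Rightarrow$(iii) is immediate by taking $w=v_N^{\dagger}$, which is normalized and monotone as noted after \eqref{eq:dual}. For (iii)$\Rightarrow$(i), the equilibrium form of (iii) is exactly the hypothesis of Proposition~\ref{prop:stress-test} with $v=v_N$, and that proposition yields additivity.

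The only delicate point is matching the Bikhchandani--Mamer formulation to our setting. Their result requires an economy with finitely many agents, quasilinear utility, and one unit of each good, and lets agents receive empty bundles. It also requires that the configuration LP be stated for the two agents with valuations $v_N$ and $w$, whose optimal integral solutions correspond to partitions of $A$. Both conditions hold here, since allocations are partitions with empty bundles allowed. I expect this bookkeeping to be the only obstacle; once it is settled, the corollary is a relabeling of Proposition~\ref{prop:stress-test} together with the sufficiency argument above.
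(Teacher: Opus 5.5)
Your proposal is correct and follows the paper's proof. Both translate (ii) and (iii) into competitive-equilibrium existence via \citet{BikhchandaniMamer1997} and obtain (iii)$\Rightarrow$(i) from Proposition~\ref{prop:stress-test}; the one difference is that for (i)$\Rightarrow$(ii) the paper cites Theorem~\ref{thm:main}, whereas you check it directly: at prices $q$ the representative incumbent is indifferent among all bundles, so it can take $A\setminus Y$ for any $Y\in D_w(q)$.
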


\begin{proof}
By the characterization of \citet{BikhchandaniMamer1997}, statements
\textup{(ii)} and \textup{(iii)} are respectively equivalent to competitive-
equilibrium existence in every representative economy $(v_N,w)$ and in the
canonical representative economy $(v_N,v_N^{\dagger})$. The equivalence then
follows from Proposition~\ref{prop:stress-test} and
Theorem~\ref{thm:main}.
\end{proof}

Corollary~\ref{cor:robust-integrality} shows how the present characterization
appears in the linear-programming framework. Aggregate additivity is exactly
the condition under which appending an arbitrary valuation preserves
integrality, and the canonical welfare-loss valuation is sufficient to test
this robust-integrality property.

The two approaches answer different questions. The linear-programming
characterization determines equilibrium existence for arbitrary fixed
valuation profiles. The direct argument identifies the primitive welfare
structure responsible for robustness to arbitrary entry. In the canonical
extension, every division of the goods produces the same welfare. Price
support therefore forces anonymous item prices to reproduce the entire
aggregate welfare valuation, which is possible exactly under additivity. The
direct argument also identifies the endogenous stress test $v_N^{\dagger}$
and the unique common price vector that supports every extension; these
objects arise from the economics of the robustness question rather than from
the solution of a primal--dual program.

\subsection{Profile-specific robustness and domain-wide existence}

The quantifiers in Theorem~\ref{thm:main} differ from those in domain-wide
existence results. \citet{BaldwinKlemperer2019} classify valuations through
demand types and characterize broad domains for equilibrium existence using
unimodularity. \citet{JagadeesanTeytelboym2025} formulate economic conditions
on substitutability and complementarity across bundles that are sufficient
and essentially necessary for equilibrium existence in rich domains. These
approaches seek restrictions under which all relevant profiles assembled from
a valuation domain admit equilibrium.

Universal entry robustness is asymmetric and profile-specific. The incumbent
economy, summarized by $v_N$, is fixed, while the valuation of one additional
agent is unrestricted. The family
\[
\{(v_N,w): w\text{ is normalized and monotone}\}
\]
need not itself form a unimodular or bundle-consistent domain. Nevertheless,
when $v_N$ is additive, every profile in this family admits equilibrium at the
same prices. The reason is not a common restriction on the entrant's demand,
but the flexibility of the additive aggregate incumbent valuation at its
supporting prices: the entrant may choose any demanded bundle, and the
incumbent population can absorb the complement.

Accordingly, the present characterization complements domain-wide results.
Demand-type and bundle-consistency approaches explain which patterns of
preferences can coexist throughout a domain. Theorem~\ref{thm:main} instead
identifies when one realized economy can accommodate an arbitrary additional
preference. The distinction is between robustness of a valuation domain and
robustness of an incumbent welfare structure.

\section{Conclusion}
\label{sec:conclusion}

We have formulated a population-robustness question for competitive
allocation with indivisible goods: when can a fixed incumbent economy
accommodate every possible single entrant while retaining efficient
decentralization through anonymous item prices? The answer is entirely
determined by aggregate welfare. Universal entry robustness holds if and only
if the incumbents' aggregate welfare valuation is additive.

The characterization yields two stronger conclusions. First, a requirement
quantified over all entrant valuations can be tested using one endogenous
welfare-loss entrant. Second, whenever robustness holds, the unique price
vector
\[
q_a=v_N(\{a\})
\]
supports the incumbent economy and every one-agent extension. The same prices
therefore decentralize both the incumbents' internal allocation problem and
the welfare-maximizing boundary between the incumbents and any entrant.

The result can also be recovered from the linear-programming characterization
of \citet{BikhchandaniMamer1997}, where it appears as robust integrality of the
relevant configuration problems. The direct welfare argument serves a
different purpose. By making every entrant--incumbent division
welfare-equivalent, the canonical entrant shows immediately why price support
forces aggregate welfare to be additive. It thereby separates the economic
mechanism from the general mathematical machinery used to test equilibrium
existence.

The universal requirement is intentionally strong. Natural extensions are to
restrict the class of entrants, to study robustness to several entrants, and
to compare exact entry robustness with approximate robustness in large
replica economies. These questions may reveal intermediate aggregate welfare
structures between unrestricted nonadditivity and the full additivity
identified here.


\section*{Statements and Declarations}

\noindent\textbf{Competing interests.}
The author declares no competing interests.

\medskip

\noindent\textbf{Data availability.}
No datasets were generated or analyzed in this study.

\end{document}